\definecolor{green}{rgb}{0.0, 0.5, 0.0}
\newtheorem{theorem}{Theorem}[section]
\newtheorem{lemma}[theorem]{Lemma}
\newtheorem{assumption}{Assumption}[section]
\def\BibTeX{{\rm B\kern-.05em{\sc i\kern-.025em b}\kern-.08em
    T\kern-.1667em\lower.7ex\hbox{E}\kern-.125emX}}
\begin{document}
\title{Information-Weighted Consensus Filter with Partial Information Exchange}
\author{Byoung-Ju Jeon and Shaoming He\textsuperscript{*}
\thanks{Byoung-Ju Jeon is with School of Aerospace, Transport and Manufacturing (SATM), Cranfield University, College Road, Cranfield MK430AL, United Kingdom (e-mail: B.Jeon@cranfield.ac.uk). }
\thanks{Shaoming He is with the School of Aerospace Engineering, Beijing Institute of Technology, Beijing 100081, China (e-mail:shaoming.he@bit.edu.cn).}
\thanks{\textsuperscript{*}Corresponding Author. Email: $\texttt{shaoming.he@bit.edu.cn}$}
}

\maketitle

\begin{abstract}
In this note, the information-weighted consensus filter (ICF) with partial information exchange is proposed to reduce the bandwidth of the signals transmitted between the sensor nodes and guarantee its convergence to the centralized Kalman filter (CKF).
In the proposed algorithm, a part of information chosen with the entry selection matrix is transmitted to the sensor nodes in the neighborhood at each consensus step, and consensus averaging is conducted at each sensor node with the partial and the local information.
This ensures that the proposed distributed estimation algorithm converges to the centralized algorithm, while allowing the proposed algorithm to achieve bandwidth reduction of the signals transmitted between the sensors.
With the proposed algorithm, the stability of the estimation error dynamics is proven and the convergence to the centralized algorithm is mathematically shown using the property of the average consensus.
Simulations are conducted to validate the proposed ICF with partial information exchange and the related theoretical findings.
\end{abstract}

\begin{IEEEkeywords}
Distributed State Estimation, Consensus, Partial Information Exchange
\end{IEEEkeywords}

\section{Introduction}
\label{sec:introduction}
\IEEEPARstart{D}{istributed} estimation algorithms have been widely studied for target tracking\cite{li2019comprehensive,li2017distributed,talebi2019distributed,olfati2005distributed,tamjidi2021unifying} due to their scalability to large networks and high fault tolerance \cite{kamal2013information,cattivelli2010diffusion,he2020distributed}.
Distributed algorithms perform estimation at each sensor node using local information from the sensor nodes in their neighborhoods, while centralized algorithms conduct estimation at a fusion center using information from all sensor nodes.
Despite the advantages of the distributed estimation algorithms, there exist several technical issues to be tackled for their successful development and application.
One of the critical issues is that the bandwidth of the signals transmitted between the sensor nodes for the distributed estimation is limited.
Another main issue is that performance of the distributed estimation algorithms can be degenerated comparing to the centralized estimation algorithms.

To resolve the issue arising from the bandwidth limit, previous studies have focused on reducing the amount of information transmitted.
The approach suggested in \cite{ribeiro2006bandwidth} is to quantize binary observations from sensors and transmit only one or a few bits per observation.
In \cite{orihuela2018negotiated}, an iterative procedure similar to a non-cooperative game trades off communication burden and estimation accuracy through the negotiation between sensor nodes.
In \cite{schizas2007distributed}, sensor data transmitted are compressed via dimensionality reduction.
The partial diffusion algorithm proposed in \cite{vahidpour2019performance, vahidpour2019partial} reduces the amount of information exchanged between the sensor nodes by selecting the entries to be transmitted simply using an entry selection matrix without complex algorithms.

To maintain accuracy of the state estimates without the fusion center, previous studies suggested distributed algorithms which aim to guarantee convergence to the centralized algorithms.
In \cite{chiuso2011gossip}, a distributed gossip-based algorithm is designed to generate the maximum-likelihood estimates and this algorithm converges to the centralized algorithm as the number of the sensor nodes goes to infinity.
Distributed adaptive algorithms in \cite{xu2012distributed} utilize a conjugate gradient, and their performances were shown to be similar with the centralized algorithms through simulations, not mathematical proofs.
The information weighted consensus filter (ICF) in \cite{kamal2012information,kamal2013information} takes weighted average on the information from the sensor nodes in the neighborhood, and this algorithm is guaranteed to asymptotically converge to the centralized Kalman filter (CKF) using the property of average consensus. Later in \cite{battistelli2014consensus,battistelli2016stability,wang2017convergence}, the ICF algorithm is proven to be a special case of the hybrid consensus on measurement and consensus on information (HCMCI) approach and the estimation error boundedness is also theoretically ensured.

In this note, the ICF with partial information exchange is proposed to accomplish both bandwidth reduction of the signals transmitted and guaranteed convergence to the CKF.
To the best of our knowledge, there have been no distributed estimation algorithms which achieve both objectives simultaneously.
For instance, the partial diffusion Kalman filtering in \cite{vahidpour2019partial} reduces the amount of information transmitted but its convergence to the CKF is not guaranteed, while ICF in \cite{kamal2013information} is guaranteed to converge to the CKF but all information should be transmitted between the sensor nodes.
In the proposed algorithm, consensus averaging of the local information is performed at each sensor node, and this guarantees the convergence to the CKF.
A part of the information to be transmitted at each consensus step is selected with the entry selection matrix to reduce the bandwidth of the signals transmitted between the sensor nodes in the neighborhood.
It is proven that the estimation error from the proposed ICF with partial information exchange is stable, and convergence of the proposed algorithm to the CKF is also mathematically shown to be guaranteed.

The rest of this note is organized as follows.
The problem formulation for the distributed estimation is defined in Section II.
The design of the ICF with partial information exchange is proposed in Section III.
The stability of the estimation error dynamics with the proposed algorithm is proven in Section IV.
In Section V, it is proven that the ICF with partial information exchange converges to CKF.
Simulation results are provided in Section VI and overall concluding remarks are given in Section VII.

\section{Problem Formulation}
A sensor network can be expressed as an undirected connected graph $\mathcal{G}=\left( \mathcal{N}, \mathcal{A} \right)$.
$\mathcal{N} = \left\{ 1,2,\cdots,N \right\}$ is the set of vertices of the graph $\mathcal{G}$, and the vertices represent the sensor nodes.
$\mathcal{A}$ is the set of edges of the graph $\mathcal{G}$, and the edge $\left( i, j \right) \in \mathcal{A}$ indicates that the sensor node $i$ can receive information from $j$.
It is assumed that only single-hop communications are available.
This means that the sensor node $i$ can communicate only with the sensor nodes in the set of its neighbors $\mathcal{N}^{i} = \left\{ j | \left(i ,j \right) \in \mathcal{A} \right\}$.
Note that $\mathcal{N}^{i}$ includes $i$ itself.

The system dynamics and the measurement equation are defined as
\begin{equation}
\begin{split} \label{eq:1}
\bm{x}_{t+1}=\bm{f}_{t}(\bm{x}_{t})+\bm{w}_{t} \\
\bm{y}_{t}^{i}=\bm{h}_{t}^{i}(\bm{x}_{t})+\bm{v}_{t}^{i}
\end{split}
\end{equation}
where $\bm{x}_{t} \in \mathbb{R}^{n \times 1}$ is the state vector and $\bm {y}_{t}^{i} \in \mathbb{R}^{m^{i} \times 1}$ is the measurement vector obtained by the sensor node $i$. 
Note that $\bm{x}_{t}$ is assumed to be observable from $\left\{ \bm {y}_{t}^{i} \right\}_{i=1}^{N}$.
$\bm{f}_{t} \in \mathbb{R}^{n \times 1}$ and $\bm{h}_{t}^{i} \in \mathbb{R}^{m^{i} \times 1}$ are vectors of nonlinear functions. 
$\bm{w}_{t} \in \mathbb{R}^{n \times 1}$ represents the process noise and $\bm{v}_{t}^{i} \in \mathbb{R}^{m^{i} \times 1}$ denotes the measurement noise generated by the sensor node $i$.
$\bm{w}_{t}$ and $\bm{v}_{t}^{i}$ are modeled as zero-mean Gaussian noises and their covariance matrices are defined as
\begin{equation}
\begin{split} \label{eq:2}
&Q_{t}=E[\bm{w}_{t} \bm{w}_{t}^{T}]>0 \\
&R_{t}^{i}=E[\bm{v}_{t}^{i} (\bm{v}_{t}^{i})^{T}]>0
\end{split}
\end{equation}
where $Q_{t} \in \mathbb{R}^{n \times n}$ is the covariance matrix of $\bm{w}_{t}$ and $R_{t}^{i} \in \mathbb{R}^{m^{i} \times m^{i}}$ is the covariance matrix of $\bm{v}_{t}^{i}$.

\section{ICF with Partial Information Exchange}

\subsection{Benchmark : C(E)KF}

As a benchmark for the ICF with partial information exchange, a centralized (Extended) Kalman Filter (C(E)KF) in the information filter form is provided in Table \ref{table_1} \cite{battistelli2014consensus}.
\begin{table}[thb]
\caption{C(E)KF in a Information Filter Form}
\label{table_1}
\begin{center}
\begin{tabular}{ | l |}
\hline
\textbf{Correction (Measurement-Update):} \\
$C_{t}^{i}=\frac{\partial \bm{h}_{t}^{i}}{\partial \bm{x}_{t}} \left( \hat{\bm{x}}_{t|t-1} \right)$ \\
${\Omega}_{t|t} = {\Omega}_{t|t-1} + \sum_{i} \left( {C}_{t}^{i} \right)^{T} {V}_{t}^{i} {C}_{t}^{i}$ \\
$\bar{\bm{y}}_{t}^{i} = \bm{y}_{t}^{i} - \bm{h}_{t}^{i} \left( \hat{\bm{x}}_{t|t-1} \right) + C_{t}^{i} \hat{\bm{x}}_{t|t-1}$\\
$\bm{q}_{t|t} = \bm{q}_{t|t-1} + \sum_{i} \left( {C}_{t}^{i} \right)^{T} {V}_{t}^{i} \bar{\bm{y}}_{t}^{i}$ \\

\textbf{Prediction (Time-Update) :} \\
$\hat{\bm{x}}_{t|t}={\Omega}_{t|t}^{-1} \bm{q}_{t|t}$ \\
$A_{t}=\frac{\partial \bm{f}_{t}}{\partial \bm{x}_{t}} \left( \hat{\bm{x}}_{t|t} \right)$ \\
${\Omega}_{t+1|t}=\left( A_{t} {\Omega}_{t|t}^{-1}A_{t}^{T}+W_{t}^{-1} \right)^{-1}$ \\
$\hat{\bm{x}}_{t+1|t}=\bm{f}_{t} \left( \hat{\bm{x}}_{t|t} \right)$ \\
$\bm{q}_{t+1|t} = {\Omega}_{t+1|t} \hat{\bm{x}}_{t+1|t} $ \\
\hline
\end{tabular}
\end{center}
\end{table}

The information matrices are defined as 
\begin{equation}
\begin{split} \label{eq:3}
\Omega_{t|t-1} \triangleq P_{t|t-1}^{-1} \qquad \qquad \Omega_{t|t} \triangleq P_{t|t}^{-1}
\end{split}
\end{equation}
where $P_{t|t-1} \in \mathbb{R}^{n \times n}$ and $P_{t|t} \in \mathbb{R}^{n \times n}$ are a priori and a posteriori covariance matrices, respectively.
The information vectors are defined as 
\begin{equation}
\begin{split} \label{eq:4}
\bm{q}_{t|t-1} \triangleq P_{t|t-1}^{-1} \hat{\bm{x}}_{t|t-1} \qquad \bm{q}_{t|t} \triangleq P_{t|t}^{-1} \hat{\bm{x}}_{t|t}
\end{split}
\end{equation}
where $\hat{\bm{x}}_{t|t-1}$ and $\hat{\bm{x}}_{t|t}$ are a priori and a posteriori state estimate vectors.
With the covariance matrices of the process noise and the measurement noise, noise information matrices $W_{t}$ and $V_{t}^{i}$ are defined as
\begin{equation}\label{eq:5}
W_{t} \triangleq Q_{t}^{-1},\quad V_{t}^{i} \triangleq (R_{t}^{i})^{-1}
\end{equation}

\subsection{Design of ICF with Partial Information Exchange}

In this note, the ICF with partial information exchange is proposed to reduce the amount of information transmitted between the sensor nodes while guaranteeing the convergence to the centralized algorithm. The pseudo-code of the proposed algorithm is summarized in Table \ref{table_2}.

\begin{table}[thb]
\caption{ICF with Partial Information Exchange}
\label{table_2}
\begin{center}
\begin{tabular}{ | l |}
\hline
\textbf{Compute the local correction terms} \\
Sample the measurement $\bm{y}_{t}^{i}$ \\
$\delta \bm{q}_{t}^{i} = \left( {C}_{t}^{i} \right)^{T} {V}_{t}^{i} \bm{y}_{t}^{i}$ \\
$\delta {\Omega}_{t}^{i} = \left( {C}_{t}^{i} \right)^{T} {V}_{t}^{i} {C}_{t}^{i}$ \\
\textbf{Consensus :} \\
${B}_{t}^{i} \left( 0 \right) = \frac{1}{N}{\Omega}_{t|t-1}^{i} + \delta {\Omega}_{t}
^{i}$ \\
$\bm{b}_{t}^{i} \left( 0 \right) = \frac{1}{N}\bm{q}_{t|t-1}^{i} + \delta \bm{q}_{t}^{i}$ \\
for $l = 0, 1, \dots, L-1$ ($L = c {\bar{\theta}}$ where $c$ : finite integer) do \\
$\qquad$ $ {B}_{t}^{i} \left( l+1 \right) = {B}_{t}^{i} \left( l \right) + \epsilon \sum_{j \in \mathcal{N}^{i}} T_{t, l}^{j} \left[ {B}_{t}^{j} \left( l \right) - {B}_{t}^{i} \left( l \right) \right]$ \\
$\qquad$ $ \bm{b}_{t}^{i} \left( l+1 \right) = \bm{b}_{t}^{i} \left( l \right) + \epsilon \sum_{j \in \mathcal{N}^{i}} T_{t, l}^{j} \left[ \bm{b}_{t}^{j} \left( l \right) - \bm{b}_{t}^{i} \left( l \right) \right]$ \\
end for \\

\textbf{Correction :} \\
${\Omega}_{t|t}^{i} = N {B}_{t}^{i} \left( L \right)$ \\
$\bm{x}_{t|t}^{i} = \left( {B}_{t}^{i} \left( L \right) \right)^{-1} \bm{b}_{t}^{i} \left( L \right)$ \\

\textbf{Prediction :} \\
$\hat{\bm{x}}_{t+1|t}^{i}=\bm{f}_{t}(\hat{\bm{x}}_{t|t}^{i})$, and $A_{t}=\frac{\partial \bm{f}_{t}}{\partial \bm{x}_{t}}(\hat{\bm{x}}_{t|t}^{i})$ \\
${\Omega}_{t+1|t}^{i}=(A_{t} {{\Omega}_{t|t}^{i}}^{-1}A_{t}^{T}+W_{t}^{-1})^{-1}$ \\
$\bm{q}_{t+1|t}^{i}={\Omega}_{t+1|t}^{i}\hat{\bm{x}}_{t+1|t}^{i}$ \\
\hline
\end{tabular}
\end{center}
\end{table}

For the bandwidth reduction, the data to be transmitted from the sensor node $j \in \mathcal{N}^{i}$ to the sensor node $i$ are selected with the entry selection matrix $T_{t,l}^{i}$ \cite{vahidpour2019partial} in \eqref{eq:8}.
\begin{equation}
\begin{split} \label{eq:8}
&\qquad \qquad \ \ T_{t, l}^{i} = \left[ \begin{matrix}
\tau_{1, t, l}^{i} & \cdots & 0\\
\vdots & \ddots & \vdots \\
0 & \cdots & \tau_{n, t, l}^{i}
\end{matrix} \right] \\
&\mbox{where}\quad \tau_{r, t, l}^{i} = 
\begin{cases}
1 & \mbox{if} \ r \in J_{(l \ \mbox{mod} \ \bar{\theta})} \\
0 & \mbox{otherwise}
\end{cases}, \quad \bar{\theta}=\lceil n / m \rceil
\end{split}
\end{equation} 
$m$ is the number of the selected entries at each iteration, and it is restricted by $m \leq n$. $\bar{\theta}$ is the number of different entry selection matrices and is smaller or equal to the number of consensus step $L$.
The coefficient subsets $J_{\zeta}$ should satisfy the following three conditions:
(1) the cardinality of $J_{\zeta}$ is between $1$ and $m$;
(2) $\cup_{\zeta=1}^{\bar{\theta}} J_{\zeta}=S$, where $S=\{1,2,\cdots,n\}$;
and (3) $J_{\zeta_{1}} \cap J_{\zeta_{2}} = \emptyset$, $\forall \zeta_{1}, \zeta_{2} \in \{1,\cdots,\bar{\theta}\}$ and $\zeta_{1} \neq \zeta_{2}$.

The convergence to the CKF is guaranteed by performing consensus averaging at each sensor node $i$ with information transmitted from the sensor nodes in the neighborhood as
\begin{equation}
\begin{split} \label{eq:7}
{B}_{t}^{i} \left( l+1 \right) = {B}_{t}^{i} \left( l \right) + \epsilon \sum_{j \in \mathcal{N}^{i}} T_{t, l}^{j} \left[ {B}_{t}^{j} \left( l \right) - {B}_{t}^{i} \left( l \right) \right] \\
\bm{b}_{t}^{i} \left( l+1 \right) = \bm{b}_{t}^{i} \left( l \right) + \epsilon \sum_{j \in \mathcal{N}^{i}} T_{t, l}^{j} \left[ \bm{b}_{t}^{j} \left( l \right) - \bm{b}_{t}^{i} \left( l \right) \right]
\end{split}
\end{equation}
where $\epsilon>0$ is the consensus gain. Note that the proposed algorithm reduces to the original ICF if an identity entry selection matrix is utilized.

\section{Stability Analysis}
The stability of the proposed ICF with partial information exchange is proven with a linear time-invariant (LTI) system.
Hence, the system dynamics and the measurement equation are supposed to have the following form.
\begin{equation}
\begin{split} \label{eq:9}
\bm{x}_{t+1}=A\bm{x}_{t}+\bm{w}_{t} \\
\bm{y}_{t}^{i}=C^{i}\bm{x}_{t}+\bm{v}_{t}^{i}
\end{split}
\end{equation}

The process and the measurement noise covariance matrices are assumed to be time-invariant, i.e., $Q_{t}=Q>0$ and $R_{t}^{i}=R^{i}>0$. The assumptions required for the stability analysis are suggested in Assumption \ref{assumption_1}.
\begin{assumption}
\label{assumption_1}
(Assumptions for Stability Analysis) \\
A1. The system matrix $A$ is invertible. \\
A2. The system is collectively observable. \\
A3. The consensus matrix $\Pi$, whose elements are the consensus weights $\epsilon$, is row stochastic and primitive. \\
\end{assumption}

In order to discuss the stability of the proposed ICF with partial information exchange, the first step is to investigate the boundedness of the information matrix $\Omega^{i}$.
\begin{lemma}
\label{lemma_1}
(Boundedness of the Information Matrix)
Let the Assumption \ref{assumption_1} hold. Then, there exist positive definite matrices $\underline{\Omega}$, $\bar{\Omega}$, $\underline{\Omega}^{+}$, and $\bar{\Omega}^{+}$ such that $0 \leq \underline{\Omega} \leq \Omega_{t|t}^{i} \leq \bar{\Omega}$ and $0 \leq \underline{\Omega}^{+} \leq \Omega_{t+1|t}^{i} \leq \bar{\Omega}^{+}$ for any $i \in {\mathcal N}$ and $t \geq 1$.
\end{lemma}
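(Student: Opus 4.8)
The plan is to establish the four bounds by a joint induction in time $t$, handling the corrected matrix $\Omega_{t|t}^{i}$ and the predicted matrix $\Omega_{t+1|t}^{i}$ together, since the time-update couples them. Throughout I would exploit three structural facts: (i) the prediction map $X \mapsto (A X^{-1} A^{T} + W_{t}^{-1})^{-1}$ is monotone in the Loewner order and, because $W_{t}^{-1}=Q$, is capped from above by $Q^{-1}$; (ii) the local measurement increment $\delta\Omega_{t}^{i} = (C^{i})^{T} V^{i} C^{i}$ is a fixed, bounded, positive semidefinite matrix in the LTI setting; and (iii) the consensus recursion \eqref{eq:7} is, for $\epsilon$ small enough that the weights $1-\epsilon\,|\mathcal{N}^{i}|$ stay nonnegative, an entrywise convex combination across the neighborhood on the rows selected by $T_{t,l}$, and the identity on the remaining rows.

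For the upper bounds I would first note the predicted bound is immediate: since $A (\Omega_{t|t}^{i})^{-1} A^{T} \geq 0$, the prediction formula gives $\Omega_{t+1|t}^{i} \leq Q^{-1} =: \bar{\Omega}^{+}$ with no dependence on $\Omega_{t|t}^{i}$. For the corrected bound, the convex-combination property of \eqref{eq:7} implies that each entry of $B_{t}^{i}(L)$ lies between the smallest and largest corresponding entries of the initial ensemble $\{B_{t}^{j}(0)\}_{j}$; combining $B_{t}^{j}(0)=\frac{1}{N}\Omega_{t|t-1}^{j}+\delta\Omega_{t}^{j}$ with the predicted upper bound $\Omega_{t|t-1}^{j} \leq Q^{-1}$ and the fixed bound on $\delta\Omega_{t}^{j}$ yields $\Omega_{t|t}^{i} = N B_{t}^{i}(L) \leq \bar{\Omega} := Q^{-1} + N\max_{j} (C^{j})^{T} V^{j} C^{j}$.

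The lower bounds are the crux. The predicted lower bound follows from the corrected one by monotonicity: if $\Omega_{t|t}^{i} \geq \underline{\Omega}$ then $\Omega_{t+1|t}^{i} \geq (A\underline{\Omega}^{-1}A^{T}+Q)^{-1}=:\underline{\Omega}^{+}$. The corrected lower bound cannot be sustained by prediction alone---when $A$ is unstable the Lyapunov recursion shrinks the information toward zero---so it must be supplied by the injected measurement information. The argument I would run is: over the consensus horizon $L=c\bar{\theta}$, the cyclic selection matrices sweep all rows (since $\cup_{\zeta} J_{\zeta} = S$) and, together with the primitivity of the consensus matrix $\Pi$ (A3), the composed consensus operator acquires uniformly positive effective weights $\gamma>0$ connecting every node to every sensor's increment. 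Hence $B_{t}^{i}(L) \geq \gamma\sum_{j}\delta\Omega_{t}^{j}$, and by collective observability (A2), accumulating $\sum_{j} (C^{j})^{T} V^{j} C^{j}$ over an observability window---using the prediction step to rotate through the state space---lower-bounds $\Omega_{t|t}^{i}$ by a positive definite $\underline{\Omega}$; closing the induction then requires checking that the resulting constants are mutually consistent across the time and prediction steps.

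I expect the main obstacle to be the lower bound under partial exchange. Unlike standard consensus averaging of full matrices, the row-selection matrices $T_{t,l}$ make each single step act only on a subset of rows, so the per-step operator is neither symmetric nor all-connecting; I would need to analyze the product of these operators over a full period $\bar{\theta}$ (during which every row is touched) and verify it inherits a strictly positive lower bound on the effective node-to-node weights from the primitivity of $\Pi$. Reconciling the finite consensus horizon $L=c\bar{\theta}$ with the observability-window length---so that information from every sensor reaches every node before the prediction step can dissipate it---is the delicate quantitative step on which the positivity of $\underline{\Omega}$ (and hence of $\underline{\Omega}^{+}$) ultimately rests.
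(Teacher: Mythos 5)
Your route is genuinely different from the paper's. The paper never exhibits explicit bounds: it sandwiches $\Omega_{t|t}^{i}$ between two auxiliary sequences $\underline{\Omega}_{t|t}^{i}$ and $\bar{\Omega}_{t|t}^{i}$, generated by running the identical recursion with the scalar $N$ replaced by constants $\underline{\omega}<N<\bar{\omega}$, and then invokes monotonicity of the prediction step together with a (terse) assertion that those auxiliary sequences stay positive definite and bounded. Your upper-bound half is correct and in places more concrete than the paper's: $\Omega_{t+1|t}^{i}\leq Q^{-1}$ is immediate from the prediction formula, and uniform entrywise boundedness of $B_{t}^{i}(L)$ (row-by-row convex combinations when $\epsilon$ is small enough) does convert into a Loewner upper bound. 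One caveat: your specific formula $\bar{\Omega}=Q^{-1}+N\max_{j}(C^{j})^{T}V^{j}C^{j}$ does not follow, because entrywise maxima do not yield Loewner maxima; only a cruder bound of the form $cI$ is implied, which is still sufficient for the lemma.

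The genuine gap is in the lower bound, precisely at the step you flagged but did not carry out: the inequality $B_{t}^{i}(L)\geq\gamma\sum_{j}\delta\Omega_{t}^{j}$. The obstruction is worse than the operator being "neither symmetric nor all-connecting." Because the consensus update left-multiplies differences by the diagonal selection matrices, the iterates $B_{t}^{i}(l)$ are not symmetric, and a row-selected piece of a positive semidefinite matrix can have an \emph{indefinite} symmetric part: with $T=\diag(1,0)$ and $M=\begin{pmatrix}1&1\\1&1\end{pmatrix}\geq 0$, the symmetric part of $TM$ is $\begin{pmatrix}1&1/2\\1/2&0\end{pmatrix}$, which has the negative eigenvalue $(1-\sqrt{2})/2$. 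So the per-step partial-exchange operator does not preserve positive semidefiniteness, and positive "effective weights" $\gamma$ cannot be extracted from primitivity of $\Pi$ alone; you would have to analyze the full product of these non-symmetric operators over a period $\bar{\theta}$ (note that under Assumption \ref{assumption_1}, unlike A4 of Assumption \ref{assumption_2}, different nodes may select different rows, so even the period-sum identity $\sum_{\sigma}T_{\sigma}^{i}=I$ does not restore symmetry of the aggregate) and only then feed the result into the observability-window accumulation. Until that product analysis is done, the positive definiteness of $\underline{\Omega}$ --- the keystone of your induction, since $\underline{\Omega}^{+}$ inherits it --- is unproven. To be fair, the paper's own proof is equally thin at the corresponding point: it asserts the auxiliary sequences are "positive definite and increasing over finite time" without producing a uniform-in-$t$ bound. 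But its sandwich construction at least reduces the claim to sequences with the same algebraic structure as the original recursion, whereas your argument, as stated, relies on a psd-preservation property that the partial exchange demonstrably lacks.
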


\begin{proof}
From Table \ref{table_2}, ${\Omega}_{t|t}^{i}$ can be rewritten as
\begin{equation}
\resizebox{0.43\textwidth}{!}{$
\begin{split} \label{eq:10}
{\Omega}_{t|t}^{i} =& N {B}_{t}^{i} \left( L \right)\\
=& N \left[ \sum_{j\in\mathcal{N}_{i}} \epsilon^{L} \left\{ \sum_{k=1}^{L} T_{t,k-1}^{j} \prod_{m=1}^{k-1} \left( I -T_{t,m-1}^{j} \right) B_{t}^{j} \left( 0 \right) \right\} \right] \\
=& \left[ \sum_{j\in\mathcal{N}_{i}} \epsilon^{L} \left\{ \sum_{k=1}^{L} T_{t,k-1}^{j} \prod_{m=1}^{k-1} \left( I -T_{t,m-1}^{j} \right) {\Omega}_{t|t-1}^{j} \right\} \right] \\
&+N \left[ \sum_{j\in\mathcal{N}_{i}} \epsilon^{L} \left\{ \sum_{k=1}^{L} T_{t,k-1}^{j} \prod_{m=1}^{k-1} \left( I -T_{t,m-1}^{j} \right) \right. \right. \\& \left. \left. \times\left( {C}_{t}^{j} \right)^{T} {V}_{t}^{j} {C}_{t}^{j} \right\} \right]
\end{split}
$}
\end{equation}
If two positive scalars $\underline{\omega}$ and $\bar{\omega}$ which satisfy $0<\underline{\omega}<N<\bar{\omega}$ are defined, the upper and the lower bounds of ${\Omega}_{t|t}^{i}$ can be derived as \eqref{eq:11} and \eqref{eq:12} from \eqref{eq:10}.
\begin{equation}
\resizebox{0.43\textwidth}{!}{$
\begin{split} \label{eq:11}
{\Omega}_{t|t}^{i} \leq &\left[ \sum_{j\in\mathcal{N}_{i}} \epsilon^{L} \left\{ \sum_{k=1}^{L} T_{t,k-1}^{j} \prod_{m=1}^{k-1} \left( I -T_{t,m-1}^{j} \right) {\Omega}_{t|t-1}^{j} \right\} \right] \\
&+ \bar{\omega} \left[ \sum_{j\in\mathcal{N}_{i}} \epsilon^{L} \left\{ \sum_{k=1}^{L} T_{t,k-1}^{j} \prod_{m=1}^{k-1} \left( I -T_{t,m-1}^{j} \right) \right. \right. \\& \left. \left. \times\left( {C}_{t}^{j} \right)^{T} {V}_{t}^{j} {C}_{t}^{j} \right\} \right]
\end{split}
$}
\end{equation}
\begin{equation}
\resizebox{0.43\textwidth}{!}{$
\begin{split} \label{eq:12}
{\Omega}_{t|t}^{i} \geq &\left[ \sum_{j\in\mathcal{N}_{i}} \epsilon^{L} \left\{ \sum_{k=1}^{L} T_{t,k-1}^{j} \prod_{m=1}^{k-1} \left( I -T_{t,m-1}^{j} \right) {\Omega}_{t|t-1}^{j} \right\} \right] \\
&+ \underline{\omega} \left[ \sum_{j\in\mathcal{N}_{i}} \epsilon^{L} \left\{ \sum_{k=1}^{L} T_{t,k-1}^{j} \prod_{m=1}^{k-1} \left( I -T_{t,m-1}^{j} \right) \right. \right. \\& \left. \left. \times\left( {C}_{t}^{j} \right)^{T} {V}_{t}^{j} {C}_{t}^{j} \right\} \right]
\end{split}
$}
\end{equation}
Suppose that $\bar{\Omega}_{t|t}^{i}$ and $\bar{\Omega}_{t|t-1}^{i}$ are the information matrices generated by replacing $N$ with $\bar{\omega}$ in the ICF with partial information exchange.
Then, $\bar{\Omega}_{t|t}^{i}$ can be further formulated using $\bar{\Omega}_{t|t-1}^{i}$ as
\begin{equation}
\resizebox{0.43\textwidth}{!}{$
\begin{split} \label{eq:13}
\bar{\Omega}_{t|t}^{i} =& \left[ \sum_{j\in\mathcal{N}_{i}} \epsilon^{L} \left\{ \sum_{k=1}^{L} T_{t,k-1}^{j} \prod_{m=1}^{k-1} \left( I -T_{t,m-1}^{j} \right) \bar{\Omega}_{t|t-1}^{j} \right\} \right] \\
&+ \bar{\omega} \left[ \sum_{j\in\mathcal{N}_{i}} \epsilon^{L} \left\{ \sum_{k=1}^{L} T_{t,k-1}^{j} \prod_{m=1}^{k-1} \left( I -T_{t,m-1}^{j} \right) \right. \right. \\& \left. \left.\times \left( {C}_{t}^{j} \right)^{T} {V}_{t}^{j} {C}_{t}^{j} \right\} \right]
\end{split}
$}
\end{equation}
Similarly, suppose that $\underline{\Omega}_{t|t}^{i}$ and $\underline{\Omega}_{t|t-1}^{i}$ are the information matrices generated by replacing $N$ with $\underline{\omega}$ in the ICF with partial information exchange.
Then, $\bar{\Omega}_{t|t}^{i}$ can be derived from $\bar{\Omega}_{t|t-1}^{i}$ as
\begin{equation}
\resizebox{0.43\textwidth}{!}{$
\begin{split} \label{eq:14}
\underline{\Omega}_{t|t}^{i} = &\left[ \sum_{j\in\mathcal{N}_{i}} \epsilon^{L} \left\{ \sum_{k=1}^{L} T_{t,k-1}^{j} \prod_{m=1}^{k-1} \left( I -T_{t,m-1}^{j} \right) \underline{\Omega}_{t|t-1}^{j} \right\} \right] \\
&+ \underline{\omega} \left[ \sum_{j\in\mathcal{N}_{i}} \epsilon^{L} \left\{ \sum_{k=1}^{L} T_{t,k-1}^{j} \prod_{m=1}^{k-1} \left( I -T_{t,m-1}^{j} \right) \right. \right. \\& \left. \left. \times\left( {C}_{t}^{j} \right)^{T} {V}_{t}^{j} {C}_{t}^{j} \right\} \right]
\end{split}
$}
\end{equation}
If $\underline{\Omega}_{t|t}^{i}$, ${\Omega}_{t|t}^{i}$ and $\bar{\Omega}_{t|t}^{i}$ are initialized with the same positive definite matrix ${\Omega}_{0|0}^{i}$, we have
\begin{equation}
\begin{split} \label{eq:15}
\underline{\Omega}_{t|t}^{i} \leq {\Omega}_{t|t}^{i} \leq \bar{\Omega}_{t|t}^{i}
\end{split}
\end{equation}
Since the prediction in the recursive steps of the Kalman filter is a monotonically nondecreasing process, \eqref{eq:15} implies that
\begin{equation}
\begin{split} \label{eq:16}
\underline{\Omega}_{t+1|t}^{i} \leq {\Omega}_{t+1|t}^{i} \leq \bar{\Omega}_{t+1|t}^{i}
\end{split}
\end{equation}
Under the Assumption \ref{assumption_1} and the fact that ${\Omega}_{1|0}^{i}>0$, $\underline{\Omega}_{t|t}^{i}$, $\underline{\Omega}_{t+1|t}^{i}$, $\bar{\Omega}_{t|t}^{i}$ and $\bar{\Omega}_{t+1|t}^{i}$ are positive definite and increasing over finite time.
This implies that there exist positive definite $\underline{\Omega}$, $\bar{\Omega}$, $\underline{\Omega}^{+}$ and $\bar{\Omega}^{+}$, which satisfy $\underline{\Omega}_{t|t}^{i} \geq \underline{\Omega}$, $\bar{\Omega}_{t|t}^{i} \leq \bar{\Omega}$, $\underline{\Omega}_{t+1|t}^{i} \geq \underline{\Omega}^{+}$ and $\bar{\Omega}_{t+1|t}^{i} \leq \bar{\Omega}^{+}$.
Thus, Lemma \ref{lemma_1} holds.
\end{proof}

As the second step of the stability analysis on the proposed algorithm, the dynamics of the estimation error is derived and suggested in Lemma \ref{lemma_2}.
\begin{lemma}
\label{lemma_2}
(Estimation Error Dynamics)
Let the Assumption \ref{assumption_1} hold. Then, the estimation error $\bm{e}_{t}^{i}$ follows the dynamic equation as
\begin{equation}
\begin{split} \label{eq:17}
&\qquad \quad \bm{e}_{t+1}^{i}=\sum_{j \in \mathcal{N}} \Phi_{t}^{i,j} \bm{e}_{t}^{j} + \sum_{j \in \mathcal{N}} \Gamma_{t}^{i,j} \bm{v}_{t}^{j} + \bm{w}_{t}\\
&\mbox{where}\\
&\Phi_{t}^{i,j} = A \left( {\Omega}_{t|t}^{i} \right)^{-1} \\
&\times \left[ \epsilon^{L} \left\{ \sum_{k=1}^{L} T_{t,k-1}^{j} \prod_{m=1}^{k-1} \left( I -T_{t,m-1}^{j} \right) {\Omega}_{t|t-1}^{j} \right\} \right]  \\
&\Gamma_{t}^{i,j} = - A N \left( {\Omega}_{t|t}^{i} \right)^{-1}\\
&\times \left[ \epsilon^{L} \left\{ \sum_{k=1}^{L} T_{t,k-1}^{j} \prod_{m=1}^{k-1} \left( I -T_{t,m-1}^{j} \right) \left( {C}_{t}^{j} \right)^{T} {V}_{t}^{j} \right\} \right]
\end{split}
\end{equation}
\end{lemma}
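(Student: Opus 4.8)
The plan is to convert the consensus recursion into a closed-form map from the initialization $(B_t^j(0),\bm{b}_t^j(0))$ to the terminal iterates $(B_t^i(L),\bm{b}_t^i(L))$, then push these through the correction and the linear prediction steps to obtain a recursion for the a priori error $\bm{e}_t^i \triangleq \bm{x}_t - \hat{\bm{x}}_{t|t-1}^i$. First I would reuse the closed-form expansion already established in the proof of Lemma \ref{lemma_1} (equation \eqref{eq:10}). Writing $\mathcal{L}_t^j \triangleq \epsilon^L \sum_{k=1}^L T_{t,k-1}^j \prod_{m=1}^{k-1}(I - T_{t,m-1}^j)$, where the weight vanishes for $j \notin \mathcal{N}^i$, the consensus output satisfies $B_t^i(L) = \sum_{j\in\mathcal{N}} \mathcal{L}_t^j B_t^j(0)$ and $\bm{b}_t^i(L) = \sum_{j\in\mathcal{N}} \mathcal{L}_t^j \bm{b}_t^j(0)$.

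Substituting the initializations $B_t^j(0) = \tfrac{1}{N}\Omega_{t|t-1}^j + (C_t^j)^T V_t^j C_t^j$ and $\bm{b}_t^j(0) = \tfrac{1}{N}\bm{q}_{t|t-1}^j + (C_t^j)^T V_t^j \bm{y}_t^j$ together with the correction $\Omega_{t|t}^i = N B_t^i(L)$ yields the information-consistency identity $\Omega_{t|t}^i = \sum_j \mathcal{L}_t^j \Omega_{t|t-1}^j + N\sum_j \mathcal{L}_t^j (C_t^j)^T V_t^j C_t^j$, which is the workhorse of the argument. I would then write the fused a posteriori estimate as $\hat{\bm{x}}_{t|t}^i = (B_t^i(L))^{-1}\bm{b}_t^i(L) = N(\Omega_{t|t}^i)^{-1}\bm{b}_t^i(L)$, insert the measurement model $\bm{y}_t^j = C^j\bm{x}_t + \bm{v}_t^j$ and the definition $\bm{q}_{t|t-1}^j = \Omega_{t|t-1}^j\hat{\bm{x}}_{t|t-1}^j$, and propagate through the LTI prediction $\hat{\bm{x}}_{t+1|t}^i = A\hat{\bm{x}}_{t|t}^i$ to reach $\bm{e}_{t+1}^i = A(\bm{x}_t - \hat{\bm{x}}_{t|t}^i) + \bm{w}_t$.

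The crux, and the step I expect to be the main obstacle, is the cancellation of the measurement-driven term $C^j\bm{x}_t$. To engineer it, I would re-inject the true state through the identity $A\bm{x}_t = A(\Omega_{t|t}^i)^{-1}\Omega_{t|t}^i\bm{x}_t$ and expand $\Omega_{t|t}^i$ with the information-consistency identity above. Because $C_t^j = C^j$ under the LTI assumption, the terms $AN(\Omega_{t|t}^i)^{-1}\mathcal{L}_t^j(C_t^j)^T V_t^j C^j\bm{x}_t$ arising in $A\bm{x}_t$ exactly match the corresponding terms in $A\hat{\bm{x}}_{t|t}^i$ and cancel, leaving $A(\bm{x}_t - \hat{\bm{x}}_{t|t}^i) = A(\Omega_{t|t}^i)^{-1}\sum_j \mathcal{L}_t^j\Omega_{t|t-1}^j(\bm{x}_t - \hat{\bm{x}}_{t|t-1}^j) - AN(\Omega_{t|t}^i)^{-1}\sum_j \mathcal{L}_t^j(C_t^j)^T V_t^j\bm{v}_t^j$.

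Finally, reading off the coefficient of $\bm{e}_t^j = \bm{x}_t - \hat{\bm{x}}_{t|t-1}^j$ and of $\bm{v}_t^j$, and restoring the additive $\bm{w}_t$, reproduces $\Phi_t^{i,j} = A(\Omega_{t|t}^i)^{-1}\mathcal{L}_t^j\Omega_{t|t-1}^j$ and $\Gamma_t^{i,j} = -AN(\Omega_{t|t}^i)^{-1}\mathcal{L}_t^j(C_t^j)^T V_t^j$, matching \eqref{eq:17} and completing the proof. The bookkeeping that needs the most care is keeping the single factor $(\Omega_{t|t}^i)^{-1}$ premultiplying both expansions of $A\bm{x}_t$ and $A\hat{\bm{x}}_{t|t}^i$, and tracking the scaling so that the $1/N$ in the initialization and the $N$ in $\Omega_{t|t}^i = NB_t^i(L)$ combine correctly; once these align, the measurement cancellation is exact and the claimed error dynamics follow.
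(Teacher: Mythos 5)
Your proposal is correct and takes essentially the same route as the paper's own proof: you reuse the closed-form consensus expansion from Lemma \ref{lemma_1} (your $\mathcal{L}_t^j$ is just compact notation for the paper's bracketed operator), re-inject the true state via $\bm{x}_t = (\Omega_{t|t}^i)^{-1} N B_t^i(L)\bm{x}_t$ exactly as in \eqref{eq:20}, and obtain the same cancellation of the $C_t^j\bm{x}_t$ terms using $\bm{y}_t^j = C^j\bm{x}_t + \bm{v}_t^j$ and $\bm{q}_{t|t-1}^j = \Omega_{t|t-1}^j\hat{\bm{x}}_{t|t-1}^j$, which yields precisely $\Phi_t^{i,j}$ and $\Gamma_t^{i,j}$ of \eqref{eq:17}. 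No gaps.
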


\begin{proof}
$\bm{e}_{t+1}^{i}$ can be expressed as \eqref{eq:18} from \eqref{eq:9}.
\begin{equation}
\begin{split} \label{eq:18}
\bm{e}_{t+1}^{i} = A \left( \bm{x}_{t} - \hat{\bm{x}}_{t|t}^{i} \right) + \bm{w}_{t}
\end{split}
\end{equation}
By using the proposed algorithm in Table \ref{table_2}, the estimated state $\hat{\bm{x}}_{t|t}^{i}$ can be expressed as
\begin{equation}
\resizebox{0.45\textwidth}{!}{$
\begin{split} \label{eq:19}
&\hat{\bm{x}}_{t|t}^{i} = \left( {\Omega}_{t|t}^{i} \right)^{-1} N \bm{b}_{t}^{i} \left( L \right) \\
&= \left( {\Omega}_{t|t}^{i} \right)^{-1}
N \left[ \sum_{j\in\mathcal{N}_{i}} \epsilon^{L} \left\{ \sum_{k=1}^{L} T_{t,k-1}^{j} \prod_{m=1}^{k-1} \left( I -T_{t,m-1}^{j} \right) \bm{b}_{t}^{j} \left( 0 \right) \right\} \right] \\
&= \left( {\Omega}_{t|t}^{i} \right)^{-1} \left[ \sum_{j\in\mathcal{N}_{i}} \epsilon^{L} \left\{ \sum_{k=1}^{L} T_{t,k-1}^{j} \prod_{m=1}^{k-1} \left( I -T_{t,m-1}^{j} \right)\bm{q}_{t|t-1}^{j} \right\} \right] \\
&+ \left( {\Omega}_{t|t}^{i} \right)^{-1} N \left[ \sum_{j\in\mathcal{N}_{i}} \epsilon^{L} \left\{ \sum_{k=1}^{L} T_{t,k-1}^{j} \prod_{m=1}^{k-1} \left( I -T_{t,m-1}^{j} \right)  \left( {C}_{t}^{j} \right)^{T} {V}_{t}^{j} \bm{y}_{t}^{j} \right\} \right]
\end{split}
$}
\end{equation}
Also, $\bm{x}_{t}$ can be rewritten as \eqref{eq:20} from Table \ref{table_2}.
\begin{equation}
\resizebox{0.45\textwidth}{!}{$
\begin{split} \label{eq:20}
&{\bm{x}}_{t} = \left( {\Omega}_{t|t}^{i} \right)^{-1} N B_{t}^{i} \left( L \right) {\bm{x}}_{t} \\
&= \left( {\Omega}_{t|t}^{i} \right)^{-1}
N \left[ \sum_{j\in\mathcal{N}_{i}} \epsilon^{L} \left\{ \sum_{k=1}^{L} T_{t,k-1}^{j} \prod_{m=1}^{k-1} \left( I -T_{t,m-1}^{j} \right) B_{t}^{j} \left( 0 \right) \right\} \right] {\bm{x}}_{t} \\
&= \left( {\Omega}_{t|t}^{i} \right)^{-1} \left[ \sum_{j\in\mathcal{N}_{i}} \epsilon^{L} \left\{ \sum_{k=1}^{L} T_{t,k-1}^{j} \prod_{m=1}^{k-1} \left( I -T_{t,m-1}^{j} \right) {\Omega}_{t|t-1}^{j} \right\} \right] {\bm{x}}_{t} \\
&+ \left( {\Omega}_{t|t}^{i} \right)^{-1} N \left[ \sum_{j\in\mathcal{N}_{i}} \epsilon^{L} \left\{ \sum_{k=1}^{L} T_{t,k-1}^{j} \prod_{m=1}^{k-1} \left( I -T_{t,m-1}^{j} \right)\left( {C}_{t}^{j} \right)^{T} {V}_{t}^{j} {C}_{t}^{j} \right\} \right] {\bm{x}}_{t}
\end{split}
$}
\end{equation}
From \eqref{eq:9}, \eqref{eq:18}, \eqref{eq:19} and \eqref{eq:20}, dynamics of $\bm{e}_{t}^{i}$ is derived as
\begin{equation}
\resizebox{0.43\textwidth}{!}{$
\begin{split} \label{eq:21}
&\bm{e}_{t+1}^{i} =A \left( {\Omega}_{t|t}^{i} \right)^{-1} \\
&\times \left[ \sum_{j \in \mathcal{N}_{i}} \epsilon^{L} \left\{ \sum_{k=1}^{L} T_{t,k-1}^{j} \prod_{m=1}^{k-1} \left( I -T_{t,m-1}^{j} \right) {\Omega}_{t|t-1}^{j} \bm{e}_{t}^{j} \right\} \right]  \\
&- A \left( {\Omega}_{t|t}^{i} \right)^{-1} N\\
&\times \left[ \sum_{j \in \mathcal{N}_{i}} \epsilon^{L} \left\{ \sum_{k=1}^{L} T_{t,k-1}^{j} \prod_{m=1}^{k-1} \left( I -T_{t,m-1}^{j} \right) \left( {C}_{t}^{j} \right)^{T} {V}_{t}^{j} \bm{v}_{t}^{j} \right\} \right]\\
& + \bm{w}_{t}
\end{split}
$}
\end{equation}
which can be readily expressed as
\begin{equation}
\begin{split} \label{eq:22}
&\bm{e}_{t+1}^{i}=\sum_{j \in \mathcal{N}_{i}} \Phi_{t}^{i,j} \bm{e}_{t}^{j} + \sum_{j \in \mathcal{N}_{i}} \Gamma_{t}^{i,j} \bm{v}_{t}^{j}+ \bm{w}_{t} 
\end{split}
\end{equation}

\end{proof}

The stability of the noise-free collective estimation error dynamics is shown in Lemma \ref{lemma_3}.
\begin{lemma}
\label{lemma_3}
(Stability of the Noise-free Collective Estimation Error Dynamics)
Let the Assumption \ref{assumption_1} hold. Then, the time-varying system $\bm{e}_{t+1} = \Phi_{t} \bm{e}_{t} $ is uniformly exponentially stable.
\end{lemma}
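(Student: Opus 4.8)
The plan is to stack the per-node errors into the collective vector $\bm{e}_t=\mathrm{col}(\bm{e}_t^1,\dots,\bm{e}_t^N)$, so that $\bm{e}_{t+1}=\Phi_t\bm{e}_t$ with $\Phi_t=[\Phi_t^{i,j}]$, and, abbreviating $\Psi_t^{i,j}\triangleq\epsilon^{L}\sum_{k=1}^{L}T_{t,k-1}^{j}\prod_{m=1}^{k-1}(I-T_{t,m-1}^{j})$, to record from Lemma~\ref{lemma_2} and \eqref{eq:10} the two identities $\Phi_t^{i,j}=A(\Omega_{t|t}^i)^{-1}\Psi_t^{i,j}\Omega_{t|t-1}^j$ and $\Omega_{t|t}^i=\sum_j\Psi_t^{i,j}\Omega_{t|t-1}^j+N\sum_j\Psi_t^{i,j}(C_t^j)^\top V_t^j C_t^j$. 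I would then invoke the standard Lyapunov theorem for linear time-varying systems and look for an information-weighted Lyapunov function $V_t(\bm{e})=\sum_i(\bm{e}^i)^\top\Omega_{t|t-1}^i\,\bm{e}^i$. By Lemma~\ref{lemma_1} the weighting matrices are uniformly sandwiched, $\underline{\Omega}^{+}\le\Omega_{t|t-1}^i\le\bar{\Omega}^{+}$, so $V_t$ is positive definite and decrescent, $c_1\|\bm{e}\|^2\le V_t(\bm{e})\le c_2\|\bm{e}\|^2$ for constants $c_2\ge c_1>0$ determined by $\underline{\Omega}^{+}$ and $\bar{\Omega}^{+}$. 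Uniform exponential stability then reduces to a uniform decrease $V_{t+1}(\Phi_t\bm{e})\le\alpha V_t(\bm{e})$ with $\alpha\in(0,1)$.

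The decrease rests on two facts. First, because the process-noise information $W^{-1}=Q^{-1}>0$ and, by Lemma~\ref{lemma_1}, $\Omega_{t|t}^i\ge\underline{\Omega}>0$, the prediction step is a strict contraction: writing $X=A(\Omega_{t|t}^i)^{-1}A^\top\le MI$ with $M=\lambda_{\max}(A\underline{\Omega}^{-1}A^\top)$ (here A1 guarantees $A^{-1}$ exists), one gets $\Omega_{t+1|t}^i=(X+W^{-1})^{-1}\le\rho\,A^{-\top}\Omega_{t|t}^i A^{-1}$ with $\rho=M/(M+\lambda_{\min}(W^{-1}))<1$. Second, the identity above gives $\Omega_{t|t}^i\ge\sum_j\Psi_t^{i,j}\Omega_{t|t-1}^j$. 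Substituting $A^{-1}\bm{e}_{t+1}^i=(\Omega_{t|t}^i)^{-1}\bm{s}_t^i$ with $\bm{s}_t^i\triangleq\sum_j\Psi_t^{i,j}\Omega_{t|t-1}^j\bm{e}_t^j$ yields $V_{t+1}\le\rho\sum_i(\bm{s}_t^i)^\top(\Omega_{t|t}^i)^{-1}\bm{s}_t^i$, and the joint convexity of $(\bm{u},\Sigma)\mapsto\bm{u}^\top\Sigma^{-1}\bm{u}$ (a Jensen/Schur-complement inequality), applied with $\Omega_{t|t}^i\ge\sum_j\Psi_t^{i,j}\Omega_{t|t-1}^j$, gives $\sum_i(\bm{s}_t^i)^\top(\Omega_{t|t}^i)^{-1}\bm{s}_t^i\le\kappa\,V_t$, where $\kappa$ collects the effective column sums of the consensus weights. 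Combining, $V_{t+1}\le\rho\kappa\,V_t$.

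To close the argument I would show $\rho\kappa<1$. When the consensus weighting is (doubly) stochastic in the sense that the column sums satisfy $\kappa\le1$ — which follows from Assumption~A3 together with the $L=c\bar{\theta}$ structure of the entry-selection schedule — the one-step estimate $\alpha=\rho\kappa<1$ already gives uniform exponential stability. Otherwise I would iterate the bound over an observability window $[t,t+\bar{\tau}]$: collective observability (A2), through the cumulative information injection $\sum_j\Psi_t^{i,j}(C_t^j)^\top V_t^j C_t^j$ captured in Lemma~\ref{lemma_1}, forces $V_{t+\bar{\tau}}\le\beta V_t$ with $\beta<1$, which is equivalent to uniform exponential stability.

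The hard part will be the matrix (diagonal) nature of the partial-exchange weights $\Psi_t^{i,j}$, which is precisely what distinguishes this scheme from the scalar-weighted original ICF. These non-scalar weights can break the symmetry of the effective matrices $\Omega_{t|t}^i$, so every positive-definite-ordering step above must be justified on symmetric parts, and the Jensen inequality in the second step is not immediate for matrix weights: its validity has to be re-established from the joint convexity of $\bm{u}^\top\Sigma^{-1}\bm{u}$, while the uniform bounds $\underline{\Omega},\bar{\Omega}$ of Lemma~\ref{lemma_1} are used to dominate the resulting non-symmetric cross terms. Equivalently, controlling the constant $\kappa$ and tying it to Assumption~A3 so that the overall factor stays below one — or folding it into the window argument driven by A2 — is where the genuine difficulty lies.
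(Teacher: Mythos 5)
Your skeleton coincides with the paper's proof up to the decisive step: the paper also uses an information-weighted quadratic Lyapunov function, the contraction property of the prediction step ($\Omega_{t+1|t}^{i} \leq \tilde{\beta} A^{-T}\Omega_{t|t}^{i}A^{-1}$ with $\tilde{\beta}<1$, i.e.\ Lemma 1 of \cite{battistelli2014kullback}, which is exactly your ``first fact''), the lower bound $\Omega_{t|t}^{i} \geq \sum_{j}\Psi_{t}^{i,j}\Omega_{t|t-1}^{j}$ extracted from \eqref{eq:10}, and the matrix Jensen-type inequality (Lemma 2 of \cite{battistelli2014kullback}, your ``second fact''). The genuine gap is in how you close the loop. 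With the \emph{unweighted} function $V_{t}(\bm{e})=\sum_{i}(\bm{e}^{i})^{\top}\Omega_{t|t-1}^{i}\bm{e}^{i}$ you arrive at $V_{t+1}\leq\rho\kappa V_{t}$, where $\kappa$ is an effective \emph{column} sum of the consensus weights, and you then assert that $\kappa\leq 1$ ``follows from Assumption A3.'' It does not: A3 only makes the consensus matrix $\Pi$ \emph{row} stochastic and primitive, and a row-stochastic matrix can have column sums well above one (e.g.\ rows $(0.5,\,0.5)$ and $(0.9,\,0.1)$ give a first-column sum of $1.4$); the $L=c\bar{\theta}$ schedule has no bearing on column sums. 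Your fallback also fails: if the one-step factor satisfies $\rho\kappa>1$, iterating the same bound over a window $[t,t+\bar{\tau}]$ only yields $(\rho\kappa)^{\bar{\tau}}$, which is worse, and collective observability (A2) cannot be ``spent again'' to create decrease --- it is already consumed in Lemma \ref{lemma_1} to make $\Omega_{t|t}^{i}$ uniformly positive definite and to make a contraction factor $\tilde{\beta}<1$ possible at all.

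The missing idea --- and the one the paper uses --- is to weight the Lyapunov function by the left Perron--Frobenius eigenvector of $\Pi^{L}$: take $V_{t}(\bm{e})=\sum_{i}p^{i}(\bm{e}^{i})^{\top}\Omega_{t|t-1}^{i}\bm{e}^{i}$, where $\bm{p}^{\top}\Pi^{L}=\bm{p}^{\top}$ and $p^{i}>0$ for all $i$ by primitivity (A3). With this weighting the problematic column sums telescope exactly: after the per-node bound $(\bm{e}_{t+1}^{i})^{\top}\Omega_{t+1|t}^{i}\bm{e}_{t+1}^{i}\leq\tilde{\beta}\,\epsilon\sum_{j\in\mathcal{N}^{i}}(\bm{e}_{t}^{j})^{\top}\Omega_{t|t-1}^{j}\bm{e}_{t}^{j}$ (your two facts), summing with weights $p^{i}$ and using the eigenvector identity $\sum_{i}p^{i}\epsilon=p^{j}$ gives $V_{t+1}\leq\tilde{\beta}V_{t}$ with the same $\tilde{\beta}<1$, and no constant $\kappa$ ever appears. (In the special case where the consensus weights are doubly stochastic --- e.g.\ symmetric weights on an undirected graph --- $\bm{p}$ is uniform and your unweighted function does work; but that is an extra hypothesis, not a consequence of A3.) So the repair is local but essential: replace the unweighted sum by the $p^{i}$-weighted sum, after which your argument completes exactly as in the paper.
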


\begin{proof}
The noise-free collective estimation error dynamics is defined as \eqref{eq:23} from \eqref{eq:22}.
\begin{equation}
\begin{split} \label{eq:23}
\bm{e}_{t+1} = \Phi_{t} \bm{e}_{t}
\end{split}
\end{equation}
$\bm{e}_{t}=col(\bm{e}_{t}^{i}, i \in \mathcal{N})$ and $\Phi_{t}$ is a block matrix, whose block elements are $\Phi_{t}^{i,j}$ in \eqref{eq:22}.

Let $\bm{p}$ represent the Perron-Frobenius left eigenvector of the matrix $\Pi^{L}$.
Under the A3 in the Assumption \ref{assumption_1}, $\bm{p}$ has strictly positive components $p^{i}$ for all $i \in \mathcal{N}$ and satisfies $\bm{p}^{T} \Pi^{L} = \bm{p}^{T}$, i.e., $\sum_{j \in \mathcal{N}} p^{j} \epsilon=p^{i}$.
Consider the following Lyapunov candidate function
\begin{equation}
\begin{split} \label{eq:24}
V_{t} (\bm{e}_{t}) =\sum_{i \in \mathcal{N}} p^{i} (\bm{e}_{t}^{i})^{T} \Omega_{t|t-1}^{i} \bm{e}_{t}^{i}
\end{split}
\end{equation}
From Lemma \ref{lemma_1}, it can be shown that there exist $\alpha_{1}>0$ and $\alpha_{2}>0$, which satisfy $\alpha_{1} \norm{\bm{e}_{t}}^{2} \leq V_{t} (\bm{e}_{t}) \leq \alpha_{2} \norm{\bm{e}_{t}}^{2}$.
Thus, $V_{t} (\bm{e}_{t}) > 0$ except $\bm{e}_{t}=\bm{0}$.

Note that there exists a positive real $\tilde{\beta}<1$ such that $\Omega_{t+1|t}^{i} \leq \tilde{\beta}A^{-T} \Omega_{t|t}^{i} A^{-1}$ from Lemma 1 of \cite{battistelli2014kullback}.
From \eqref{eq:10}, we can imply that
\begin{equation}
\Omega_{t|t}^{i} \geq \sum_{j \in \mathcal{N}^{i}} \left[ \epsilon^{L} \left\{ \sum_{k=1}^{L} T_{t,k-1}^{j} \prod_{m=1}^{k-1} \left( I -T_{t,m-1}^{j} \right) {\Omega}_{t|t-1}^{j}  \right\} \right]
\end{equation}
Then, from \eqref{eq:17} and Lemma 2 of \cite{battistelli2014kullback}, \eqref{eq:25} can be readily obtained.
\begin{equation}
\resizebox{0.43\textwidth}{!}{$
\begin{split} \label{eq:25}
&(\bm{e}_{t+1}^{i})^{T} \Omega_{t+1|t}^{i} \bm{e}_{t+1}^{i}= \left( \sum_{j \in \mathcal{N}^{i}} A \left( {\Omega}_{t|t}^{i} \right)^{-1} \right. \\
&\left. \times\left[ \epsilon^{L} \left\{ \sum_{k=1}^{L} T_{t,k-1}^{j} \prod_{m=1}^{k-1} \left( I -T_{t,m-1}^{j} \right) {\Omega}_{t|t-1}^{j} \right\} \right] \bm{e}_{t}^{j} \right)^{T} {\Omega}_{t+1|t}^{i} \\
&\times\sum_{j \in \mathcal{N}^{i}} A \left( {\Omega}_{t|t}^{i} \right)^{-1}\left[ \epsilon^{L} \left\{ \sum_{k=1}^{L} T_{t,k-1}^{j} \prod_{m=1}^{k-1} \left( I -T_{t,m-1}^{j} \right) {\Omega}_{t|t-1}^{j} \right\} \right] \bm{e}_{t}^{j} \\
&\leq \tilde{\beta} \left( \sum_{j \in \mathcal{N}^{i}} \left[ \epsilon^{L} \left\{ \sum_{k=1}^{L} T_{t,k-1}^{j} \prod_{m=1}^{k-1} \left( I -T_{t,m-1}^{j} \right) {\Omega}_{t|t-1}^{j} \right\} \right] \bm{e}_{t}^{j} \right)^{T} \\ & \left( \Omega_{t|t}^{i} \right)^{-1} \sum_{j \in \mathcal{N}^{i}} \left[ \epsilon^{L} \left\{ \sum_{k=1}^{L} T_{t,k-1}^{j} \prod_{m=1}^{k-1} \left( I -T_{t,m-1}^{j} \right) {\Omega}_{t|t-1}^{j} \right\} \right] \bm{e}_{t}^{j}\\
&\leq \tilde{\beta} \sum_{j \in \mathcal{N}^{i}} (\bm{e}_{t}^{j})^{T}  \left[ \epsilon^{L} \left\{ \sum_{k=1}^{L} T_{t,k-1}^{j} \prod_{m=1}^{k-1} \left( I -T_{t,m-1}^{j} \right) {\Omega}_{t|t-1}^{j} \right\} \right] \bm{e}_{t}^{j}\\
&\leq \tilde{\beta} \epsilon \sum_{j \in \mathcal{N}^{i}} (\bm{e}_{t}^{j})^{T}  {\Omega}_{t|t-1}^{j} \bm{e}_{t}^{j}\\
\end{split}
$}
\end{equation}
Consequently, \eqref{eq:26} can be derived from \eqref{eq:24} and \eqref{eq:25}.
\begin{equation}
\begin{split} \label{eq:26}
V_{t+1} (\bm{e}_{t+1}) &= \sum_{i \in \mathcal{N}} p^{i} (\bm{e}_{t+1}^{i})^{T} \Omega_{t+1|t}^{i} \bm{e}_{t+1}^{i}\\
&\leq \tilde{\beta} \sum_{i, j \in \mathcal{N}} p^{i} \epsilon (\bm{e}_{t}^{j})^{T} \Omega_{t|t-1}^{j} \bm{e}_{t}^{j} \\
&=\tilde{\beta}\sum_{j \in \mathcal{N}} p^{j} (\bm{e}_{t}^{j})^{T} \Omega_{t|t-1}^{j} \bm{e}_{t}^{j} \\
&=\tilde{\beta} V_{t}(\bm{e}_{t})
\end{split}
\end{equation}
\end{proof}
From Lemma \ref{lemma_1}, Lemma \ref{lemma_2} and Lemma \ref{lemma_3}, the stability of the collective estimation error dynamics with the ICF with partial information exchange is proven in Theorem \ref{theorem_1}.
\begin{theorem}
\label{theorem_1}
(Stability of the Collective Estimation Error Dynamics)
Let the Assumption \ref{assumption_1} hold. Then, the estimation error $\bm{e}_{t}^{i} = \bm{x}_{t} - \hat{\bm{x}}_{t|t-1}^{i}$ with the ICF with partial information exchange is asymptotically bounded in mean square for every $i \in \mathcal{N}$, i.e. ${\lim \ \sup}_{t \to \infty} \mathbb{E} \left\{ || \bm{e}_{t}^{i} ||^{2} \right\} < +\infty$, for any $i \in \mathcal{N}$.
\end{theorem}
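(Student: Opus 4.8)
The plan is to treat the full, noise-driven collective error dynamics as a perturbation of the noise-free system analyzed in Lemma~\ref{lemma_3}, and then to close a scalar recursion on the expected value of the Lyapunov function $V_t$. Stacking \eqref{eq:22} over $i \in \mathcal{N}$, I would first write the collective dynamics as $\bm{e}_{t+1} = \Phi_t \bm{e}_t + \bm{\eta}_t$, where $\Phi_t$ is the block matrix of Lemma~\ref{lemma_3} and $\bm{\eta}_t = \mathrm{col}(\bm{\eta}_t^i, i \in \mathcal{N})$ collects the noise contributions $\bm{\eta}_t^i = \sum_{j \in \mathcal{N}^i} \Gamma_t^{i,j} \bm{v}_t^j + \bm{w}_t$. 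In the LTI setting of Section~IV the information matrices $\Omega_{t|t}^i$ and $\Omega_{t+1|t}^i$ evolve through deterministic recursions that do not depend on the measurements, so $\Phi_t$, $\Gamma_t^{i,j}$, and the block-diagonal weight $\mathcal{P}_t \triangleq \diag(p^i \Omega_{t|t-1}^i)$ defining $V_t(\bm{e}_t) = \bm{e}_t^T \mathcal{P}_t \bm{e}_t$ are all deterministic.

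The next step is to bound the driving noise in mean square uniformly in $t$. Using the lower bound $\Omega_{t|t}^i \geq \underline{\Omega} > 0$ from Lemma~\ref{lemma_1}, the factor $(\Omega_{t|t}^i)^{-1}$ in $\Gamma_t^{i,j}$ is uniformly bounded, and since the entry-selection products and $\epsilon^L$ are bounded while $Q$ and $R^i$ are time-invariant, there exists $\mu < \infty$ with $\mathbb{E}\{\norm{\bm{\eta}_t}^2\} \leq \mu$ for all $t$. I would then expand $V_{t+1}(\bm{e}_{t+1}) = (\Phi_t \bm{e}_t + \bm{\eta}_t)^T \mathcal{P}_{t+1} (\Phi_t \bm{e}_t + \bm{\eta}_t)$ into the quadratic term $(\Phi_t \bm{e}_t)^T \mathcal{P}_{t+1}(\Phi_t \bm{e}_t)$, a cross term, and $\bm{\eta}_t^T \mathcal{P}_{t+1} \bm{\eta}_t$. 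The quadratic term equals $V_{t+1}(\Phi_t \bm{e}_t) \leq \tilde{\beta} V_t(\bm{e}_t)$ by Lemma~\ref{lemma_3}; upon taking expectations, the cross term vanishes because $\bm{\eta}_t$ is zero-mean and independent of $\bm{e}_t$ while $\Phi_t$, $\mathcal{P}_{t+1}$ are deterministic; and $\mathbb{E}\{\bm{\eta}_t^T \mathcal{P}_{t+1} \bm{\eta}_t\} \leq \lambda_{\max}(\mathcal{P}_{t+1}) \mu \leq \nu$ for a finite $\nu$, again by the upper bound $\Omega_{t+1|t}^i \leq \bar{\Omega}^{+}$ of Lemma~\ref{lemma_1}.

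These estimates yield the scalar recursion $\mathbb{E}\{V_{t+1}\} \leq \tilde{\beta}\, \mathbb{E}\{V_t\} + \nu$ with $0 < \tilde{\beta} < 1$. Iterating (or invoking the comparison lemma for linear scalar difference inequalities) gives $\mathbb{E}\{V_t\} \leq \tilde{\beta}^{\,t} \mathbb{E}\{V_0\} + \nu/(1-\tilde{\beta})$, hence $\limsup_{t \to \infty} \mathbb{E}\{V_t\} \leq \nu/(1-\tilde{\beta}) < \infty$. Finally, using the coercivity bound $\alpha_1 \norm{\bm{e}_t}^2 \leq V_t(\bm{e}_t)$ established in the proof of Lemma~\ref{lemma_3} together with $\norm{\bm{e}_t^i}^2 \leq \norm{\bm{e}_t}^2$, I would conclude $\limsup_{t \to \infty} \mathbb{E}\{\norm{\bm{e}_t^i}^2\} \leq \nu/(\alpha_1(1-\tilde{\beta})) < +\infty$ for every $i \in \mathcal{N}$, which is the claim.

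As for the anticipated difficulty, the Lyapunov contraction is handed to us by Lemma~\ref{lemma_3}, so the substantive work is twofold: justifying that the cross term has zero expectation, which hinges on the information matrices being deterministic in the LTI case so that $\Phi_t$ and $\mathcal{P}_{t+1}$ are independent of the current noise; and obtaining the uniform noise-covariance bound $\mu$, which rests entirely on the two-sided bounds of Lemma~\ref{lemma_1} to control $\Gamma_t^{i,j}$ and $\lambda_{\max}(\mathcal{P}_{t+1})$. I expect the latter---tracking the bound on $\Gamma_t^{i,j}$ through the entry-selection products---to be the main obstacle, since it is where the partial-information-exchange structure enters and must be shown not to destroy uniform boundedness.
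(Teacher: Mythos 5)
Your proposal is correct and follows essentially the same route as the paper: the paper's proof is a two-sentence sketch that invokes Lemma~\ref{lemma_3} for uniform exponential stability of the noise-free dynamics and Lemma~\ref{lemma_1} for mean-square boundedness of the noise terms $\sum_{j} \Gamma_{t}^{i,j} \bm{v}_{t}^{j} + \bm{w}_{t}$, and then asserts the conclusion. Your Lyapunov recursion $\mathbb{E}\{V_{t+1}\} \leq \tilde{\beta}\,\mathbb{E}\{V_{t}\} + \nu$, with the zero-mean cross term and the coercivity bound, is precisely the standard argument the paper leaves implicit, so you have in fact supplied a more complete version of the same proof.
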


\begin{proof}
From Lemma \ref{lemma_3}, it is proven that the noise-free collective estimation error dynamics is uniformly exponentially stable.
The process and the measurement noise-related terms in the error dynamics \eqref{eq:17}, $\sum_{j \in N} \Gamma_{t}^{i,j} \bm{v}_{t}^{j} + \bm{w}_{t}$, are bounded in mean square since Lemma \ref{lemma_1} holds.
Thus, Theorem \ref{theorem_1} holds. 
\end{proof}

\section{Convergence to Centralized Estimation Algorithm}
In Section V, convergence of the proposed ICF with partial information exchange to the C(E)KF is investigated.
The assumptions required for this convergence analysis are suggested in Assumption \ref{assumption_2}.
\begin{assumption}
\label{assumption_2}
(Assumptions for Convergence Analysis) \\
A1. The system matrix $A$ or $A_{t}$ is invertible. \\
A2. The system is collectively observable. \\
A3. The consensus matrix $\Pi$, whose elements are the consensus weights $\epsilon$, is row stochastic and primitive. \\
A4. Every sensor node broadcasts the elements at the same positions in $\bm{b}_{t}^{j}$ and $B_{t}^{j}$ for each consensus step ( i.e. $T_{t,l}^{j} = T_{t,l}^{i}$ for all $j$ ).
\end{assumption}

The proposed ICF with partial information exchange is proven to converge to the C(E)KF in Theorem \ref{theorem_2}.
\begin{theorem}
\label{theorem_2}
(Convergence of the ICF with Partial Information Exchange to the C(E)KF)
Let the Assumption \ref{assumption_2} holds.
Then, the ICF with partial information exchange asymptotically converges to the C(E)KF.
\end{theorem}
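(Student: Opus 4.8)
The plan is to reduce the partial-exchange consensus to the ordinary (full-exchange) ICF consensus by exploiting Assumption \ref{assumption_2}.A4, and then to invoke the already-established convergence of the original ICF to the C(E)KF. First I would rewrite the consensus recursion \eqref{eq:7} under A4. Since A4 forces $T_{t,l}^{j}=T_{t,l}^{i}=:T_{t,l}$ for every $j$, the selection matrix factors out of the neighbour sum, giving
\[
B_{t}^{i}(l+1) = (I-T_{t,l})B_{t}^{i}(l) + T_{t,l}\Bigl[B_{t}^{i}(l) + \epsilon\!\!\sum_{j\in\mathcal{N}^{i}}\!\!\bigl(B_{t}^{j}(l)-B_{t}^{i}(l)\bigr)\Bigr].
\]
Because $T_{t,l}$ is the diagonal indicator of the index set $J_{(l\bmod\bar\theta)}$, this identity shows that at step $l$ only the rows indexed by $J_{(l\bmod\bar\theta)}$ are updated, each by a single ordinary consensus step with the matrix $\Pi$, while all other rows are frozen. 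Consequently the recursion decouples completely across the $n$ row indices: row $r$ of the collection $\{B_{t}^{i}\}_{i\in\mathcal{N}}$ evolves independently of every other row.

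Next I would track a single row $r$. By conditions (2)--(3) on the subsets $J_{\zeta}$, the index $r$ lies in exactly one $J_{\zeta}$, so it is selected exactly once per cycle of $\bar\theta$ steps; over $L=c\bar\theta$ steps it is therefore selected exactly $c$ times. On the subsequence of steps at which $r$ is active, and using that the inactive steps leave row $r$ of every node untouched, the stacked vector of row-$r$ entries obeys the plain average-consensus recursion, so its value at step $L$ equals $\Pi^{c}$ applied to its value at step $0$. Under A3, the same average-consensus property already used for the original ICF yields $\Pi^{c}\to\tfrac{1}{N}\mathbf{1}\mathbf{1}^{T}$ as $c\to\infty$, hence $[B_{t}^{i}(L)]_{r}\to\tfrac{1}{N}\sum_{j\in\mathcal{N}}[B_{t}^{j}(0)]_{r}$ for every node $i$. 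Since $r$ is arbitrary, $B_{t}^{i}(L)\to\tfrac{1}{N}\sum_{j}B_{t}^{j}(0)$, and likewise $\bm{b}_{t}^{i}(L)\to\tfrac{1}{N}\sum_{j}\bm{b}_{t}^{j}(0)$.

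These limits are precisely the ones attained by the original ICF with full information exchange. Substituting them into the correction step of Table \ref{table_2} gives $\Omega_{t|t}^{i}=N B_{t}^{i}(L)\to\tfrac{1}{N}\sum_{j}\Omega_{t|t-1}^{j}+\sum_{j}\delta\Omega_{t}^{j}$, together with the analogous expression for the information vector. Once all nodes share a common prediction, these coincide with the C(E)KF correction in Table \ref{table_1}, the term $\tfrac{1}{N}\sum_{j}\Omega_{t|t-1}^{j}$ collapsing to $\Omega_{t|t-1}$. An induction over $t$, with the prediction step of Table \ref{table_2} mirroring that of Table \ref{table_1}, then propagates the agreement from one time instant to the next and establishes convergence to the C(E)KF.

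The main obstacle I anticipate is justifying the per-row decoupling rigorously, in particular verifying that the frozen rows of the neighbours' matrices are exactly the values needed to make each active step a clean consensus update with the full $\Pi$. This is where A4 is indispensable: if different nodes broadcast different entries ($T_{t,l}^{j}\neq T_{t,l}^{i}$), the update of row $r$ at node $i$ would aggregate only those neighbours that happened to transmit row $r$, destroying the doubly-stochastic structure and hence the exact average-consensus limit. A secondary point to treat carefully is the order of limits, since exact convergence requires $c\to\infty$ (equivalently $L\to\infty$), consistent with the asymptotic nature of the original ICF.
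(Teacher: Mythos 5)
Your proposal is correct and follows essentially the same route as the paper: both exploit Assumption \ref{assumption_2}.A4 to decouple the recursion entry-wise (selected rows perform an ordinary average-consensus step, unselected rows stay frozen), invoke the average-consensus limit to conclude $B_{t}^{i}(L)\to\tfrac{1}{N}\sum_{j}B_{t}^{j}(0)$ and $\bm{b}_{t}^{i}(L)\to\tfrac{1}{N}\sum_{j}\bm{b}_{t}^{j}(0)$, and then match these limits with the C(E)KF correction using consensus on priors. Your explicit tracking of each row's active subsequence via $\Pi^{c}$ is a somewhat cleaner phrasing of what the paper does by multiplying the recursion by each $T_{t,\sigma-1}^{i}$ and summing over $\sigma$, but it is the same argument, not a different one.
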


\begin{proof}
Let $\alpha_{t}^{i}$ represent $\bm{b}_{t}^{i}$ or $B_{t}^{i}$ in Table \ref{table_2}.
Then, the consensus step in Table \ref{table_2} can be rewritten with respect to $\alpha_{t}^{i} \left( l \right)$ as
\begin{equation}
\begin{split}
\label{eq:27}
\alpha_{t}^{i} \left( l \right) = \epsilon \sum_{j \in \mathcal{N}^{i}} \left[ T_{t, l-1}^{j} \alpha_{t}^{j} \left( l-1 \right) + \left( I - T_{t, l-1}^{j} \right) \alpha_{t}^{j} \left( l-1 \right) \right]
\end{split}
\end{equation}
Considering the A4 in the Assumption \ref{assumption_2}, \eqref{eq:28} can be derived from \eqref{eq:27} with $l=1,\cdots,L \left(=c\bar{\theta} \right)$.
\begin{equation}
\begin{split} \label{eq:28}
&\alpha_{t}^{i} \left( 1 \right) = \epsilon \sum_{j \in \mathcal{N}^{i}} \left[ T_{t, 0}^{i} \alpha_{t}^{j} \left( 0 \right) + \left( I - T_{t, 0}^{i} \right) \alpha_{t}^{i} \left( 0 \right) \right]  \\
&\quad \quad \quad \quad \vdots \\
&\alpha_{t}^{i} \left( \bar{\theta} \right) = \epsilon \sum_{j \in \mathcal{N}^{i}} \left[ T_{t, {\bar{\theta}}-1}^{i} \alpha_{t}^{j} \left( \bar{\theta}-1 \right) + \left( I - T_{t, {\bar{\theta}}-1}^{i} \right) \alpha_{t}^{i} \left( \bar{\theta}-1 \right) \right]  \\
&\alpha_{t}^{i} \left( \bar{\theta}+1 \right) = \epsilon \sum_{j \in \mathcal{N}^{i}} 
\left[ T_{t, {0}}^{i} \alpha_{t}^{j} \left( \bar{\theta} \right) + \left( I - T_{t, {0}}^{i} \right) \alpha_{t}^{i} \left( \bar{\theta} \right) \right] \\
&\quad \quad \quad \quad \vdots \\
&\alpha_{t}^{i} \left( L \right) = \epsilon \sum_{j \in \mathcal{N}^{i}} 
\left[ T_{t, \bar{\theta}-1}^{i} \alpha_{t}^{j} \left( L-1 \right) + \left( I - T_{t, \bar{\theta}-1}^{i} \right) \alpha_{t}^{i} \left( L-1 \right) \right] 
\end{split}
\end{equation}
By multiplying each of $T_{t,{0}}^{i}$, $\cdots$, and $T_{t,{\bar{\theta}-1}}^{i}$ on both sides of \eqref{eq:28}, for each $\sigma=1,\cdots,\bar{\theta}$, we have
\begin{equation}
\begin{split} \label{eq:29}
\begin{cases}
T_{t, {\sigma-1}}^{i} \alpha_{t}^{i} \left( l + 1 \right) = \epsilon \sum_{j \in \mathcal{N}^{i}} T_{t, {\sigma-1}}^{i} \alpha_{t}^{j} \left( l \right) & l \ \mbox{mod} \ \bar{\theta} = \sigma-1 \\
T_{t, {\sigma-1}}^{i} \alpha_{t}^{i} \left( l + 1 \right) = T_{t, {\sigma-1}}^{i} \alpha_{t}^{i} \left( l \right) & \mbox{otherwise}
\end{cases}
\end{split}
\end{equation}
where $l=0,\cdots,L-1$.
By summing the first equation of \eqref{eq:29} from $\sigma=1$ to $\bar{\theta}$, one can imply that
\begin{equation}
\begin{split} \label{eq:30}
\sum_{\sigma=1}^{\bar{\theta}} T_{t, {\sigma-1}}^{i} \alpha_{t}^{i} \left( l+1 \right) = \sum_{\sigma=1}^{\bar{\theta}} \epsilon T_{t, {\sigma-1}}^{i} \sum_{j \in \mathcal{N}^{i}} \alpha_{t}^{j} \left( l \right)
\end{split}
\end{equation}
If consensus is conducted with infinite steps (i.e. $l \to \infty$), \eqref{eq:30} further becomes 
\begin{equation}
\begin{split} \label{eq:31}
\sum_{\sigma=1}^{\bar{\theta}} T_{t, {\sigma-1}}^{i} \lim_{l \to \infty} \alpha_{t}^{i} \left( l+1 \right) = \sum_{\sigma=1}^{\bar{\theta}} T_{t, {\sigma-1}}^{i} \lim_{l \to \infty} \sum_{j \in \mathcal{N}^{i}} \epsilon \alpha_{t}^{j} \left( l \right)
\end{split}
\end{equation}
Since $\lim_{l \to \infty} \sum_{j \in \mathcal{N}^{i}} \epsilon \alpha_{t}^{j} \left( l \right) = \frac{1}{N} \sum_{j=1}^{N} \alpha_{t}^{j}\left( 0 \right)$ for strongly connected networks from the property of the average consensus and $\sum_{\sigma=1}^{\bar{\theta}} T_{t, {\sigma-1}}^{i} = I$, we have
\begin{equation}
\begin{split} \label{eq:32}
\lim_{l \to \infty} \alpha_{t}^{i} \left( l \right) = \frac{1}{N} \sum_{j=1}^{N} \alpha_{t}^{j}\left( 0 \right)
\end{split}
\end{equation}
From Table \ref{table_2} and \eqref{eq:32}, $\lim_{l \to \infty} {B}_{t}^{i} \left( l \right)$ and $\lim_{l \to \infty} \bm{b}_{t}^{i} \left( l \right)$ can be expressed as
\begin{equation}
\begin{split} \label{eq:33}
&\lim_{l \to \infty} {B}_{t}^{i} \left( l \right) = \frac{1}{N} \sum_{j=1}^{N} \left[ \frac{1}{N}{\Omega}_{t|t-1}^{j} + \delta {\Omega}_{t}^{j} \right] \\
&\lim_{l \to \infty} \bm{b}_{t}^{i} \left( l \right) = \frac{1}{N} \sum_{j=1}^{N} \left[ \frac{1}{N}\bm{q}_{t|t-1}^{j} + \delta \bm{q}_{t}^{j} \right]
\end{split}
\end{equation}
Consensus on priors guarantees that all local sensor nodes have the same a priori estimates, resulting in ${\Omega}_{t|t-1}=\frac{1}{N} \sum_{j=1}^{N} {\Omega}_{t|t-1}^{j}$ and $\bm{q}_{t|t-1}=\frac{1}{N} \sum_{j=1}^{N} \bm{q}_{t|t-1}^{j}$.
Thus, \eqref{eq:34} can be derived from \eqref{eq:33} and Table \ref{table_2}.
\begin{equation}
\begin{split} \label{eq:34}
&{\Omega}_{t|t} = N \lim_{l \to \infty} {B}_{t}^{i} \left( l \right) ={\Omega}_{t|t-1} + \sum_{i=1}^{N} \left( {C}_{t}^{i} \right)^{T} {V}_{t}^{i} {C}_{t}^{i} \\
&\bm{q}_{t|t} = {\Omega}_{t|t} \bm{x}_{t|t} = N \lim_{l \to \infty} \bm{b}_{t}^{i} \left( l \right) =\bm{q}_{t|t-1} + \sum_{i=1}^{N} \left( {C}_{t}^{i} \right)^{T} {V}_{t}^{i} \bm{y}_{t}^{i}
\end{split}
\end{equation}
${\Omega}_{t|t}$ and $\bm{q}_{t|t}$ in \eqref{eq:34} are identical to the information matrix and the information vector of the C(E)KF summarized in Table \ref{table_1}.
\end{proof}

\section{Simulation}
Simulations are performed to demonstrate performance of the proposed ICF with partial information exchange and verify the theoretical findings in Section IV and V.

\subsection{Simulation Settings}
In this simulation study, $100$ Monte-Carlo simulations are conducted.
An example simulation environment is illustrated in Fig. \ref{fig1}.
\begin{figure}[!h]
   \centering
   \includegraphics[width=0.45\textwidth]{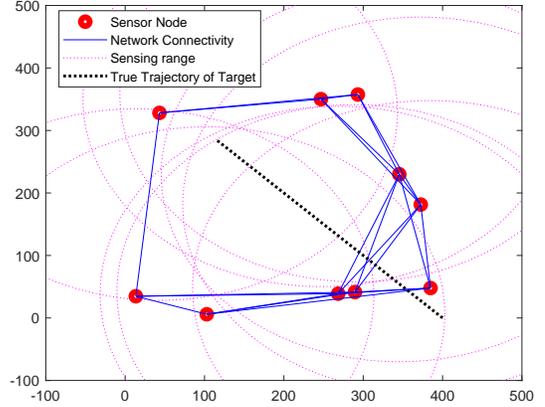}
   \caption{Example Simulation Environment}
   \label{fig1}
\end{figure}
The sensor network has $10$ sensor nodes and each node communicates with other nodes within the range of $300m$.
The positions of the sensor nodes are randomly selected in each simulation and the sensing range is $300m$.
The state vector $\bm{x}_{t}$ is defined with the $2$-dimensional position and velocity of the target (i.e. $\bm{x}_{t}=\left[ x_{T}, \ y_{T}, \ {V_{x}}_{T}, \ {V_{y}}_{T}\right]^{T}$).
The initial position of the target $\left[ x_{0,T}, \ y_{0,T} \right]^{T}$ is set to be $\left[ 400m, 0 \right]^{T}$.
The initial velocity of the target $\left[ {V_{x}}_{0,T}, \ {V_{y}}_{0,T} \right]^{T} = \left[ V_{0,T} \cos \left( \psi_{0,T} \right), V_{0,T} \sin \left( \psi_{0,T} \right) \right]^{T}$ is defined by randomly selecting $V_{0,T}$ and $\psi_{0,T}$ within $10m/s \leq V_{0,T} \leq 15m/s$ and $\frac{1}{2}\pi \leq \psi_{0,T} \leq \frac{3}{4}\pi$.
The speed of the target $V_{T}$ randomly fluctuates with the variance of $0.25 m^{2}/s^{2}$.
The system dynamics is modeled as a linear system with the system matrix $A$ in \eqref{eq:35}.
\begin{equation}
\begin{split} \label{eq:35}
A = \left[ \begin{matrix}
1 & 0 & 0.1 & 0 \\
0 & 1 & 0 & 0.1 \\
0 & 0 & 1 & 0 \\
0 & 0 & 0 & 1 \\
\end{matrix} \right] \\
\end{split}
\end{equation}
The covariance of the process noise is set to be $Q = diag([10, 10, 1 , 1])$.
The sensors measure the $2$-dimensional position of the target (i.e. $\bm{y}_{t}^{i}=\left[ x_{T}, \ y_{T}\right]^{T}$), resulting in the measurement matrix as
\begin{equation}
\begin{split} \label{eq:36}
C^{i} = \left[ \begin{matrix}
1 & 0 & 0 & 0 \\
0 & 1 & 0 & 0 \\
\end{matrix} \right] \\
\end{split}
\end{equation}
The covariance of the measurement noise is set to be $R = diag([25, 25])$.
The initial estimates of the state vector and the information matrix are set to be $\hat{\bm{x}}_{1|0}=\left[0, \ 0, \ 0, \ 0 \right]^{T}$ and ${\Omega}_{1|0}=\bm{0}_{4, 4}$, respectively.
The simulation time step is $0.1s$ and the total simulation time is $30s$.
 
To investigate the effect of the number of entries transmitted in each consensus step on the estimation performance, two different entry selection matrices in Table \ref{table_3} are utilized. Simulations are conducted by increasing the number of consensus steps at each time step from 1 to 20.

\begin{table}[!h]
\caption{Entry Selection Matrix for Simulation}
\label{table_3}
\begin{center}
\renewcommand{\arraystretch}{1.2}
\begin{tabular}{ | c | c |}
  \hline
  Case & Entry Selection Matrix \\
  \hline
  \hline
  $1$ & \makecell{ $T_{t,2k-1}^{i} = diag([1, 0, 1, 0])$, $T_{t,2k}^{i} = diag([0, 1, 0, 1])$ } \\ 
  \hline
  $2$ & \makecell{ $T_{t,4k-3}^{i} = diag([1, 0, 0, 0])$, $T_{t,4k-2}^{i} = diag([0, 1, 0, 0])$, \\ $T_{t,4k-1}^{i} = diag([0, 0, 1, 0])$, $T_{t,4k}^{i} = diag([0, 0, 0, 1])$ } \\
  \hline
\end{tabular}
\end{center}
\end{table}

\subsection{Simulation Results}

The average of the estimation error norm is plotted over time in Fig. \ref{fig2}.
In Fig. \ref{fig3}, the average of the final estimation error norm is plotted over the number of consensus step.
Note that the proposed algorithm with the entry selection matrices in Table \ref{table_3} (Case 1 and 2) is compared with the ICF and the CKF, and Fig. \ref{fig2} especially shows the simulation results with the number of consensus step set to be $4$ (Case A) and $12$ (Case B) for the consensus based approaches.

\begin{figure}[!h]
   \centering
   \includegraphics[width=0.45\textwidth]{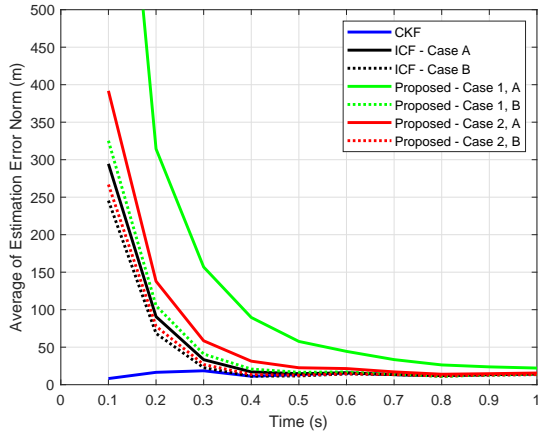}
   \caption{Average of Estimation Error Norm over Time}
   \label{fig2}
\end{figure}

\begin{figure}[!h]
   \centering
   \includegraphics[width=0.45\textwidth]{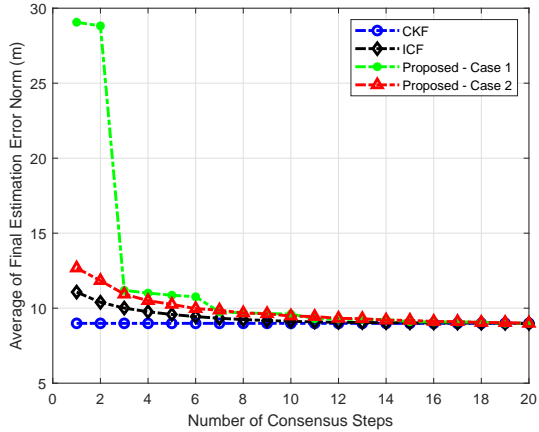}
   \caption{Average of Final Estimation Error Norm over Number of Consensus Steps}
   \label{fig3}
\end{figure}

It is observed from Fig. \ref{fig2} that the estimation error from the proposed ICF with partial information exchange is stable, and this coincides with the proof in Section IV.
Fig. \ref{fig2} shows that the estimation accuracy of the proposed algorithm becomes similar to that of the ICF and the CKF over time, and the convergence speed with the proposed algorithm is slightly lower than that of the ICF and the CKF.
It can be seen from Fig. \ref{fig2} that the speed degeneration of the proposed algorithm can be reduced as the number of the consensus steps increases.
Although there exists a trade-off between the bandwidth and the convergence speed, the bandwidth is significantly reduced with the proposed algorithm comparing to the ICF while the convergence speed is slightly degenerated: the amount of information transmitted between the sensor nodes at each consensus step is reduced by $50\%$ in Case 1 and $75\%$ in Case 2 while the settling time is increased by less than $0.7s$ in all the cases.
Fig. \ref{fig2} also indicates that the entry selection matrix influences the convergence speed and hence it would be good to optimize this matrix by considering bandwidth limit.
Fig. \ref{fig3} shows that the proposed ICF with partial information exchange converges to the CKF as the number of the consensus steps increases, and this corresponds to the proof on the guaranteed convergence to the centralized algorithm in Section V.

\section{Conclusion}
The ICF with partial information exchange which achieves both bandwidth reduction and guaranteed convergence to the centralized algorithm is successfully proposed in this paper.
In the proposed algorithm, consensus averaging is performed at each sensor node with information from the sensor nodes in the neighborhood, and information transmitted between the sensor nodes at each consensus step is chosen with the entry selection matrix.
Thus, the proposed ICF with partial information exchange is guaranteed to converge to the CKF while reducing the bandwidth of the signals transmitted between sensors.
The estimation error with the proposed algorithm is proven to be asymptotically bounded in mean square.
Also, it is proven that the proposed ICF with partial information exchange converges to the CKF by utilizing the property of the average consensus.
Numerical simulation validates the proposed ICF with partial information exchange and the related theoretical findings.

\bibliographystyle{IEEEtran}
\bibliography{Reference}

\end{document}